\crefname{equation}{Eq.}{equations}
\crefname{section}{Sec.}{sections}
\crefname{figure}{Fig.}{figures}
\crefname{appendix}{Appendix}{appendices}
\crefname{table}{Table}{tables}
\theoremstyle{plain}
\newtheorem{theorem}{Theorem}
\crefname{theorem}{Theorem}{theorems}
\crefname{lemma}{Lemma}{lemmas}
\crefname{corollary}{Corollary}{corollaries}
\crefname{definition}{Definition}{definitions}
\newcommand{\mc}{\mathcal}
\newcommand{\bs}{\boldsymbol}
\newcommand*\diff{\mathop{}\!\mathrm{d}}
\DeclareMathOperator*{\argmin}{arg\,min}
\begin{document}
\title{Quantum circuit simulation with a local time-dependent variational principle} 

\author{Aaron Sander\,\orcidlink{0009-0007-9166-6113}}
\thanks{These authors contributed equally. \\ Aaron Sander \href{mailto:aaron.sander@tum.de}{aaron.sander@tum.de} \\ Maximilian Fröhlich \href{mailto:froehlich@wias-berlin.de}{froehlich@wias-berlin.de}.}
\affiliation{Technical University of Munich, Munich, Germany}

\author{Maximilian Fröhlich\,\orcidlink{0009-0007-5276-2858}}
\affiliation{Weierstrass Institute for Applied Analysis and Stochastics, Berlin, Germany}

\author{Mazen Ali\,\orcidlink{0000-0003-1664-7098}}
\affiliation{Department of Flow and Material Simulation, Fraunhofer ITWM, Kaiserslautern, Germany}

\author{Martin Eigel\,\orcidlink{0000-0003-2687-4497}}
\affiliation{Weierstrass Institute for Applied Analysis and Stochastics, Berlin, Germany}

\author{Jens Eisert\,\orcidlink{0000-0003-3033-1292}}
\affiliation{Freie Universität Berlin, Berlin, Germany}
\affiliation{Helmholtz-Zentrum Berlin für Materialien und Energie, Berlin, Germany}

\author{Michael Hintermüller\,\orcidlink{0000-0001-9471-2479}}
\affiliation{Weierstrass Institute for Applied Analysis and Stochastics, Berlin, Germany}

\author{Christian B.~Mendl\,\orcidlink{0000-0002-6386-0230}}
\affiliation{Technical University of Munich, Munich, Germany}

\author{Richard M.~Milbradt\,\orcidlink{0000-0001-8630-9356}}
\affiliation{Technical University of Munich, Munich, Germany}

\author{Robert Wille\,\orcidlink{0000-0002-4993-7860}}
\affiliation{Technical University of Munich, Munich, Germany}
\affiliation{Munich Quantum Software Company GmbH, Munich, Germany}
\affiliation{Software Competence Center Hagenberg GmbH (SCCH), Hagenberg, Austria}

\date{\today}

\begin{abstract}
Classical simulations of quantum circuits are vital for assessing potential quantum advantage and benchmarking devices, yet they require sophisticated methods to avoid the exponential growth of resources. Tensor network approaches, in particular \emph{matrix product states} (MPS) combined with the \emph{time-evolving block decimation} (TEBD) algorithm, currently dominate large-scale circuit simulations. These methods scale efficiently when entanglement is limited but suffer rapid bond dimension growth with increasing entanglement and handle long-range gates via costly SWAP insertions. Motivated by the success of the \emph{time-dependent variational principle} (TDVP) in many-body physics, we reinterpret quantum circuits as series of discrete time evolutions, using gate generators to construct an MPS-based circuit simulation via a local TDVP formulation. This addresses TEBD's key limitations by (1) naturally accommodating long-range gates and (2) optimally representing states on the MPS manifold. By diffusing entanglement more globally, the method suppresses local bond growth and reduces memory and runtime costs. We benchmark the approach on five 49-qubit circuits: three Hamiltonian circuits (1D open and periodic Heisenberg, 2D $7 \times 7$ Ising) and two algorithmic ones (\emph{quantum approximate optimization}, \emph{hardware-efficient ansatz}). Across all cases, our method yields substantial resource reductions over standard tools, establishing a new state-of-the-art for circuit simulation and enabling advances across quantum computing, condensed matter, and beyond.
\end{abstract}

\maketitle

\section{Introduction}

Classical simulation of quantum circuits plays a central role in the development of quantum technologies, providing a critical tool to validate algorithms, benchmark hardware, and probe quantum dynamics beyond current experimental reach. However, simulating large quantum systems remains notoriously difficult due to the growth of entanglement and exponential computational complexity with system size~\cite{Zhou_2020}. 
This state of affairs can be exploited in paradigmatic quantum computational tasks:
Indeed, recent demonstrations which presumed quantum advantages over classical supercomputers have explicitly been constructed in order to exceed the computational capabilities of classical simulation~\cite{Arute_2019, Morvan_2024, haghshenas_2025,SupremacyReview}. This, in principle, shows the computational capabilities even of near-term quantum devices. At the same time, it can be seen as an invitation to further develop scalable, structure-exploiting classical simulation methods that challenge such quantum advantages and can be used for their benchmarking.


Motivated by this line of thought, a variety of simulation strategies has been developed to tackle this challenge, ranging from tensor network families such as 
\emph{projected entangled pair states} (PEPS)~\cite{PhysRevResearch.6.013326, Or_s_2019} and the 
\emph{multiscale entanglement renormalization ansatz}  (MERA)~\cite{Berezutskii_2025}, to efficient methods for special cases such as stabilizer circuits~\cite{Aaronson_2004}, as well as methods using well-established techniques from classical computing such as decision diagrams \cite{Zulehner2019, Sander2023}. More recently, contraction-optimized general tensor networks~\cite{Tindall_2024, Gray_2021} have pushed classical simulation boundaries to match IBM’s largest experiments~\cite{IBM_exp2023}. Still, \emph{matrix product states}  (MPS) have remained the dominant practical tool, especially for one-dimensional systems, owing to their favorable scaling in low-entanglement regimes, algorithmic flexibility, and compatibility with gate-based quantum circuit simulation.

Among MPS-based approaches, the \emph{time-evolving block decimation}  (TEBD) algorithm~\cite{doi:10.1137/090752286, Verstraete2008, Verstraete_2004, Vidal_2003, Vidal_2004, Zhou_2020} has become a workhorse for simulating quantum circuits and many-body systems. By applying local gates sequentially and truncating intermediate states, TEBD can efficiently simulate circuits with limited entanglement. However, TEBD suffers from two fundamental limitations: (1) it introduces local truncation error after every gate, which compounds rapidly as bond dimensions grow; and (2) it cannot natively handle long-range gates, instead requiring costly SWAP networks that significantly increase overhead~\cite{Stoudenmire_2010}.

The \emph{time-dependent variational principle} (TDVP)~\cite{Haegeman2011, Haegeman_2013, Haegeman_2016, Paeckel_2019,PhysRevLett.100.130501} provides an elegant alternative by evolving states entirely within the MPS manifold, making use of
tangent spaces. Rather than truncating after gate application, TDVP projects the full time-evolution onto the tangent space of the MPS, producing a variationally optimal trajectory. TDVP can naturally accommodate long-range Hamiltonians~\cite{Zaletel_2015} and better conserves physical quantities under limited bond dimension. These features have made TDVP a leading method in many-body physics~\cite{Sander_TJM_2025, Moroder_2023}, though it has not yet been adapted to the discrete, gate-based setting of quantum circuit simulation, likely due to the incompatibility between TDVP's continuous-time formalism and the inherently discrete nature of quantum gates. Unlike Hamiltonian evolution, quantum circuits involve non-infinitesimal, localized unitary operations that do not readily fit within standard TDVP frameworks.

In this work, we close this gap by introducing the first TDVP-based algorithm specifically tailored for quantum circuits. The central idea is to interpret each quantum gate as a generator of local time evolution and then to apply a local TDVP update restricted to only a small region of the MPS. This reinterpretation enables to simulate entire quantum circuits with both local and long-range operations, while avoiding the stepwise truncation errors inherent to TEBD. It also leads to several key innovations that push the current limits of quantum circuit simulation. In particular:
\begin{itemize}
\item It enables direct simulation of long-range gates \emph{without} SWAP insertion or scheduling overhead.
\item It variationally adjusts bond dimensions using global state information, reducing local truncation errors.
\item It suppresses bond dimension growth by distributing entanglement more evenly, improving scalability.
\end{itemize}

Unlike standard TDVP, which projects onto the full MPS tangent space, our method proves that a restricted projector acting only on a gate-local window yields \emph{exactly} the same result. This theoretical insight allows us to implement the method with the same asymptotic cost as 
TEBD while preserving the advantages of variational evolution.

We benchmark the method on quantum circuits simulating both Hamiltonian and non-Hamiltonian dynamics, including the 49-site Heisenberg model (with open and periodic boundaries), the $7 \times 7$ two-dimensional Ising model, and circuits with randomly parameterized two-qubit gates such as QAOA and hardware-efficient ansätze. These examples span a wide range of entanglement structures and gate layouts, from strictly local to highly non-local. Across all cases, we observe substantial reductions in memory footprint and runtime compared to TEBD, while maintaining comparable accuracy.

Our results establish the local TDVP as a new state-of-the-art for 
classical simulation of quantum circuits. The method bridges discrete quantum gate models and continuous-time variational principles, offering a general-purpose simulation framework with broad implications for quantum algorithm development, hardware verification, and the study of strongly-correlated quantum matter.
Additionally, all code is openly accessible as part of the \emph{Yet Another Quantum Simulator} (YAQS) \cite{YAQS} within the \emph{Munich Quantum Toolkit} (MQT) framework \cite{wille_mqt2024}.

\section{Lessons from Continuous-Time TDVP}
The core of the approach developed in this work comes from important lessons that can be drawn from techniques used in simulating time evolution of quantum many-body systems.
For illustration,
a state vector of an $N$-site quantum system with local dimension $d$ resides in a Hilbert space $\mathcal{H} = (\mathbb{C}^d)^{\otimes N}$. A general quantum state vector $\ket{\Psi} \in \mathcal{H}$ is thus a complex valued vector in a space of dimension $d^N$
\begin{equation}
    \ket{\Psi} = \sum^{d-1}_{q_1, \ldots, q_N=0} \Psi_{q_1 \ldots q_N} \ket{q_1, \ldots, q_N},
\end{equation}
which becomes prohibitively large for classical simulation as $N$ grows. To address this, it is common to employ a \emph{matrix product state} (MPS) representation
\begin{equation}
    \label{eq:MPS}
\ket{\Psi} = \sum^{d-1}_{q_1, \ldots, q_N=0} M^{q_1}_1 \dots M^{q_N}_N \ket{q_1, \ldots, q_N},
\end{equation}
which expresses $\ket{\Psi}$ as a chain -- sometimes referred to as ``tensor train'' of $N$ tensors, each with a physical leg $q_n$ of dimension $d$ and virtual (bond) legs $a_{n-1}, a_n$ of dimension $\chi_{n-1}, \chi_n$, respectively, that connect neighboring tensors \cite{doi:10.1137/090752286}. This structure encodes entanglement locally and reduces the representation cost from exponential to $\mathcal{O}(N d \chi^2)$, where $\chi$ is the maximum bond dimension. This structure allows efficient storage and localized operations, making it the state-of-the-art for the simulation of large-scale systems.
Each physical dimension $q_n$ with $1 \leq i \leq N$ and dimensions $0 \leq q_n \leq d-1$ can be viewed as a qudit (or qubit for $d=2$).

Beyond static representations, we are typically interested in the time evolution of quantum states under a possibly time-dependent Hamiltonian $t\mapsto H(t)$. This evolution is governed by the Schrödinger equation,
\begin{equation}
    \frac{\diff}{\diff t}\ket{\Psi(t)} = -\frac{i}{\hbar} H(t) \ket{\Psi(t)},
\end{equation}
with formal solution
\begin{equation}
    \ket{\Psi(t)} = U(t) \ket{\Psi(0)},
\end{equation}
where $t\mapsto U(t)$ is a family of unitary time-evolution operators defined by
\begin{equation}
    U(t) = e^{-\mathrm{i} H t}.
\end{equation}
As is common practice, $\hbar$ has been set to 1.
In many applications, this evolution is approximated by a sequence of small, discrete time steps
\begin{equation}
    U(n \delta t) \approx \prod_{j=1}^n U_j, \quad \text{with}\quad U_j = e^{-\mathrm{i} H_j \delta t}.
    \label{eq:DiscreteUnitaries}
\end{equation}

A widely used method for simulating time evolution within the MPS framework is the \emph{time-evolving block decimation} (TEBD) algorithm \cite{Vidal_2003, Vidal_2004}. TEBD approximates the 
unitary evolution operators $U_i(\delta t)$ by Trotter decomposing the Hamiltonian into a sequence of local two-site gates, which are applied sequentially across the chain. After each gate application, the resulting MPS is truncated by discarding small Schmidt values, ensuring the bond dimension remains bounded \cite{Schollw_ck_2011}. This approach is particularly effective for short-range Hamiltonians and systems with low entanglement growth, where the Trotter error and truncation error can both be controlled. However, TEBD has well-known limitations: systems with long-range interactions or rapidly growing entanglement can require prohibitively large bond dimensions. From this description, one may note that quantum circuits are an analogous simulation structure such that TEBD and MPS-based quantum circuit simulation are two sides of the same coin.

An alternative approach for unitary time-evolution is provided by the \emph{time-dependent variational principle}  (TDVP) \cite{Haegeman2011, Haegeman_2013, Haegeman_2016, Paeckel_2019,PhysRevLett.100.130501}, which evolves quantum states entirely within a variational manifold $\mc M_\chi$ defined by its bond dimensions $\chi$. Rather than applying gates and truncating, TDVP projects the full time-evolution $U_i(\delta t)$ onto the tangent space of the MPS at each timestep using a Hamiltonian encoded as a \emph{matrix product operator} (MPO). This processs finds the provably optimal path of the time-evolution along the manifold which can be understood as the most efficient set of bond dimensions and parameters of an MPS that represents the state at each timestep.
TDVP can thus be expressed as the optimization problem
\begin{equation}
    \frac{\diff}{\diff t} \ket*{\tilde{\Psi}(t)} 
    = \argmin_{\ket{\tilde{\Psi}(t)} \in \mc M_{\bs\chi}} 
    \left\| \frac{\diff}{\diff t} \ket{\Psi(t)} 
    - \frac{\diff}{\diff t} \ket{\tilde{\Psi}(t)} \right\|_{\mc H},
\end{equation}
which is equivalent to solving the differential equation on the manifold 
\begin{equation}\label{eq:diffeqproj}
    \frac{\diff}{\diff t}\ket*{\tilde{\Psi}(t)}=-\mathrm{i}P_{T_{\ket{\tilde{\Psi}(t)}}\mc M_{\bs\chi}}H\ket*{\tilde{\Psi}(t)},    
\end{equation}
where $P_{T_{\ket{\tilde{\Psi}(t)}}\mc M_{\bs\chi}}$ is the $\mc H$-orthogonal projection
onto the tangent space of $\mc M_{\bs\chi}$ at point $\ket*{\tilde{\Psi}(t)}$ \cite{Haegeman_2016}. 

The present work makes a momentous contribution by combining these ideas in a novel fashion, resulting in substantial algorithmical improvements.
Concretely, in this work, we utilize the two-site TDVP for which this projector is defined by
\begin{equation}
    \label{eq:TDVPProjectors}
    \begin{split}
    P_{T,2_{\ket{\tilde{\Psi}(t)}}\mc M_{\bs\chi}}
        &= \sum_{n=1}^{N} P_L^{[1:n-1]}\otimes I_n\otimes I_{n+1}\otimes P_R^{[n+2:N]}\\
        &\phantom{=\ } -\sum_{n=1}^{N-1}P_L^{[1:n]}\otimes I_{n+1} \otimes P_R^{[n+2:N]},
    \end{split}
\end{equation}
where $P_L^{[1:n-1]}$ and $P_R^{[n+2:N-1]}$ are projection operators built with left- and right-environments around sites $(n, n+1)$. The first summation generates a forward-evolution of the Schrödinger equation and the second summation a backward-evolution. We call these the forward projectors and backwards projectors for brevity.

This yields a set of coupled differential equations at each pair of nearest-neighbor sites in the MPS to solve the time-dependent Schrödinger equation.
Practically, this concept results in a method that preserves all symmetries and conservation laws. It ensures unitary, entanglement-respecting evolution within the chosen bond dimension and can be solved in a DMRG-like sweep of the MPS \cite{Haegeman2011, Lubich15, Paeckel_2019}. Unlike TEBD, TDVP is inherently norm-preserving and seems promising in realizing long-range interactions in physical systems \cite{Zaletel_2015}. In comparison to TEBD, TDVP has never been adapted or applied to quantum circuit simulation despite its recent successes and dominance in other areas of quantum science \cite{Sander_TJM_2025, Moroder_2023}.

\section{Application of TDVP to quantum circuits}
While TEBD naturally operates in terms of quantum gates, TDVP is traditionally formulated for continuous time evolution under a Hamiltonian. To apply TDVP in the context of quantum circuit simulation, we reinterpret quantum gates as \emph{discrete time-evolution operators},
following a Trotterized prescription. It is important to note that the approach taken is versatile enough to capture both Hamiltonian dynamics encoded as quantum circuits, as well as more conventional algorithmic quantum circuits which will be explored later in the manuscript.

A quantum circuit $G$ consists of a sequence of $|G|$ unitary gates, which we write as
\begin{equation}
    G = \prod_{j=1}^{|G|} g_j,
\end{equation}
mirroring the structure of discrete time evolution operators in \cref{eq:DiscreteUnitaries}. Here, each gate $g_j$ represents a localized operation that contributes to the overall evolution.

Thanks to their unitary nature, each gate $g_j$ can be associated with a Hermitian generator $H_j$ such that
\begin{equation}
    g_j = e^{-i H_j}
\end{equation}
with a unit timestep $\delta t=1$. These generators correspond to the physical interactions that implement the gates and can be interpreted as \emph{discrete Hamiltonians}, whose exponentials yield the observable gate dynamics. For reference, a list of generators of common quantum gates can be found in \cref{appendix:generator}.

TDVP requires a \emph{matrix product operator} (MPO) representation of the Hamiltonian, which in our case corresponds to the generators of quantum gates. While methods exist to construct MPOs for arbitrary Hamiltonians~\cite{McCulloch2007, Frowis2010, Cakir2025}, nearly all common two-qubit gates acting on sites $k$ and $k+q$ can be associated with generators of the form
\begin{equation}
    \label{eq:localh}
    H = H_k \otimes H_{k+q},
\end{equation}
where $q \geq 1$ defines the range of the gate, with $q = 1$ for nearest-neighbor interactions. This structure can be encoded as a length-$q$ factorization, i.e., an MPO with bond dimension one, 
where the generator is decomposed into local tensors and padded with identities \cite{Sander_EC_2025, Or_s_2019}. 

In standard 2TDVP, the projection operators defined in \cref{eq:TDVPProjectors} act on the full Hilbert space, potentially undermining the gate-level locality of quantum circuits. However, we show that these global projectors can be restricted to a local 
window $[k-1, k+q+1]$ around the gate generator without introducing additional error compared to the full TDVP. Specifically, the projected action of the Hamiltonian becomes
\begin{equation}
    \label{eq:LocalTDVP}
    \begin{split}
        P_{\mathrm{loc,2}}^{[k, k+q]}H\ket{\Psi} &= \sum_{n=k-1}^{k+q} P_L^{[1:n-1]}\otimes I_n\otimes I_{n+1}\otimes P_R^{[n+2:N]}\\
        &\phantom{=\ } -\sum_{n=k-1}^{k+q-1}P_L^{[1:n]}\otimes I_{n+1} \otimes P_R^{[n+2:N]}.
    \end{split}
\end{equation}
We refer to this as the \emph{local TDVP}. This 
reduction is crucial for scalability since the cost of applying TDVP now depends only on the gate range $q$ rather than the total system size $N$. A high-level overview of this method can be found in \cref{sec:LocalTDVP}. For details on the implementation of these projectors in tensor networks as well as visualizations, we recommend several sources for a more detailed explanation \cite{Haegeman_2013, Paeckel_2019, Sander_TJM_2025}. 

This approach results in two key theoretical advantages that solve the main limitations experienced in TEBD-based 
quantum circuit simulation, observed purely in the structure of its projectors. The local TDVP does not have any inherent limitation in performing long-range gates. This eliminates the need for complex optimizations in SWAP networks or other optimizations required to simulate quantum circuits with arbitrary gates. Building on this, unlike TEBD, where each gate update modifies the bond dimension using only local information at that bond, TDVP incorporates neighboring environments into the update. This enables bond dimension adjustments at any location to reflect correlations across the entire state, resulting in a more globally informed and optimal evolution. Even in the case of nearest-neighbor gates where the number of required projectors is minimal, the method still operates within a small window around the gate, allowing entanglement (and thus bond dimension growth) to propagate into neighboring sites.

Combining these insights, we propose a novel quantum circuit simulation algorithm consisting of two primary components:

\begin{enumerate}
\item \textbf{Single-qubit quantum gates}: These gates are applied directly to the local tensors in the MPS representation without the use of generators as no practical advantage would be gained from exponentiating rather than using the unitary directly.

\item \textbf{Multi-qubit quantum gates}: These gates are applied using the local TDVP projection method described above. Each gate application consists of the following steps:
    \begin{enumerate}
        \item \textbf{Construct generator MPO}: Form the MPO representation of the gate generator localized to the involved qubits in the window $[k-1, k+q+1]$ by padding the form from \cref{eq:localh} with identities.
        \item \textbf{Ensure MPS canonical form}: Set the MPS to mixed canonical form at the first site of the window $k-1$.
        \item \textbf{Apply window to MPS}: Use only the MPS tensors within the window $M^{q_{k-1}}_{k-1} \dots M^{q_{k+q+1}}_{k+q+1}$.
        \item \textbf{Apply local TDVP projectors}: Apply projectors including forward- and backward-evolution in one sweep from $k-1 \rightarrow k+q+1$. This step includes creating effective generators, exponentiating them with the Lanczos method \cite{Lanczos_1950}, and applying the result to the relevant tensors.
    \end{enumerate}
\end{enumerate}
Repeating these steps for each gate in the quantum circuit provides a scalable, flexible method with provably optimal bond dimensions for simulating complex quantum circuits of arbitrary gate ranges and structures.

\section{Justification for local TDVP}
\subsection{Equivalence with global TDVP}

In this section, we discuss a conceptual insight on the equivalence of the local TDVP with a global TDVP scheme.
Notably, the local scheme only uses a minimal subset of the TDVP projectors \cite{Haegeman_2016}, which ensures that opposite to a full TDVP no error is introduced in the computations while the computational effort of the time evolution is reduced drastically. In fact, the complexity of the local TDVP scheme does not depend on the system size anymore but only on the number of qubits which a given gate acts on. This effectively localizes the operation and exploits the benefits of the tensor network structure.

\begin{theorem}[Local 2TDVP]\label{theorem:localtdvp-error}
    Let $H$ be a local generator as in \eqref{eq:localh} acting on the qubit range $[k,k+q]\subset [1, N]$.
    Then, for any $\ket{\Psi}\in\mc M_{\bs\chi}$, we have $P_{T,2_{\ket{\Psi}}\mc M_{\bs\chi}}H\ket{\Psi}
    = P_{\mathrm{loc,2}}^{[k, k+q]}H\ket{\Psi}$ where $P_{\mathrm{loc,2}}^{[k, k+q]}H\ket{\Psi}$ is defined in \cref{eq:LocalTDVP}.
\end{theorem}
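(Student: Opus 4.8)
The plan is to show that every term of the full two-site projector in \cref{eq:TDVPProjectors} whose index $n$ lies outside the window $[k-1,k+q]$ either annihilates against a partner term or telescopes away once it acts on $H\ket{\Psi}$, leaving exactly the truncated sums of \cref{eq:LocalTDVP}. The whole argument rests on two elementary facts about the environment projectors, which I would establish first. Writing $\ket{\Psi}$ in mixed-canonical form, the left projector obeys $(P_L^{[1:m]}\otimes I)\ket{\Psi}=\ket{\Psi}$ and the right projector obeys $(I\otimes P_R^{[m:N]})\ket{\Psi}=\ket{\Psi}$, because each projects onto the span of the orthonormal canonical basis states in which $\ket{\Psi}$ is already expanded, and orthonormality collapses the resulting double sum to the identity. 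Since $H$ in \cref{eq:localh} is supported only on sites $[k,k+q]$, it commutes with $P_L^{[1:m]}$ whenever $m\le k-1$ and with $P_R^{[m:N]}$ whenever $m\ge k+q+1$ by disjointness of supports. Combining commutation with the fixing property yields the two workhorse identities
\begin{equation}
\begin{split}
(P_L^{[1:m]}\otimes I)\,H\ket{\Psi}&=H\ket{\Psi},\quad m\le k-1,\\
(I\otimes P_R^{[m:N]})\,H\ket{\Psi}&=H\ket{\Psi},\quad m\ge k+q+1.
\end{split}
\end{equation}

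Next I would split the global forward and backward sums into a left-exterior region ($n\le k-2$), the window, and a right-exterior region, and treat the two exteriors separately. For the left exterior, both the forward term $F_n:=P_L^{[1:n-1]}\otimes I_n\otimes I_{n+1}\otimes P_R^{[n+2:N]}$ and the backward term $B_n:=P_L^{[1:n]}\otimes I_{n+1}\otimes P_R^{[n+2:N]}$ have their $P_L$ factor supported on sites $\le k-1$, so the first identity strips those factors and leaves $F_nH\ket{\Psi}=B_nH\ket{\Psi}=(I\otimes P_R^{[n+2:N]})H\ket{\Psi}$. Hence each forward term cancels its backward partner at the same $n$, and the left exterior contributes nothing.

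For the right exterior I would introduce the shorthand $G_m:=(P_L^{[1:m]}\otimes I)H\ket{\Psi}$. The second identity collapses the right projector in both $F_n$ and $B_n$ once $n\ge k+q-1$, giving $F_nH\ket{\Psi}=G_{n-1}$ for $n\ge k+q+1$ and $B_nH\ket{\Psi}=G_n$ for $n\ge k+q$, with the empty-range convention $P_R^{[N+2:N]}=1$ handling the boundary term $n=N$. After reindexing, the surviving forward tail $\sum_{n=k+q+1}^{N}G_{n-1}$ coincides term-by-term with the surviving backward tail $\sum_{n=k+q}^{N-1}G_{n}$, so this region also telescopes to zero. What remains is precisely the forward sum over $n=k-1,\dots,k+q$ minus the backward sum over $n=k-1,\dots,k+q-1$, which is $P_{\mathrm{loc,2}}^{[k,k+q]}H\ket{\Psi}$.

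The main obstacle is not any single calculation but getting the index bookkeeping to align exactly with the window endpoints, so that no window term is accidentally cancelled and no exterior term survives. I would verify the two boundary cases explicitly: at $n=k-1$ the right projector $P_R^{[k+1:N]}$ still covers the gate site $k+q$ (using $q\ge1$), so the second identity does \emph{not} apply and the term is correctly retained; symmetrically, at $n=k+q$ the left projector still covers site $k$. The only other point demanding care is a precise statement of the environment projectors together with the boundary conventions at $n=1$ and $n=N$, which is exactly where the fixing property and the empty-range factors must be pinned down.
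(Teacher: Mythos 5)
Your proposal is correct and follows essentially the same route as the paper's proof: both show that every exterior term of the global projector in \cref{eq:TDVPProjectors}, once applied to $H\ket{\Psi}$, collapses via the locality of $H$ and the canonical-form fixing properties of the environment projectors, so that left-exterior forward and backward terms cancel pairwise at the same index while the right-exterior terms cancel after the shift $n\mapsto n-1$, leaving exactly \cref{eq:LocalTDVP}. Your write-up merely makes explicit what the paper phrases through the orthogonality-center gauge argument (the commutation-plus-fixing identities, the reindexed telescoping of the right tail, and the boundary checks at $n=k-1$ and $n=k+q$), which is a welcome level of care but not a different proof.
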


\begin{proof}
The proof is a direct consequence of the projections in \cref{eq:TDVPProjectors} acting on $H\ket{\Psi}$.
By exploiting the gauge freedom in the MPS representation of $\ket{\Psi}$, we can first consider $\ket{\Psi}$ with left-orthogonal and right-orthogonal tensors and the center of orthogonality at site $n$.
Since $H$ is local, it has an MPO representation with all sites to the left of $k$ and to the right of $k+q$ consisting of trivial identity tensors.
Putting both together, for $n<k-1$, we observe that
\begin{equation}
    \begin{split}
        &P_L^{[1:n-1]}\otimes I_n\otimes I_{n+1}\otimes P_R^{[n+2:N]}H\ket{\Psi}\\
        &= I_1\otimes\cdots\otimes I_{n+1}\otimes P_R^{[n+2:N]}H\ket{\Psi}\\
        &= P_L^{[1:n]}  \otimes I_{n+1} \otimes P_R^{[n+2:N]}H\ket{\Psi}.
    \end{split}
\end{equation}
This holds similarly for $n>k+q+1$, where the orthogonality center is to the right of the non-trivial tensors. Contracting the product of the projections with the Hamiltonian and the quantum state vector $\ket{\Psi}$
from the right, we observe that
\begin{equation}
    \begin{split}
    &P_L^{[1:n-1]}\otimes I_n\otimes I_{n+1}\otimes P_R^{[n+2:N]}H\ket{\Psi}\\
    &= P_L^{[1:n-1]}\otimes I_{n} \otimes \cdots \otimes  I_N  H\ket{\Psi}\\
    &= P_L^{[1:n-1]} \otimes 
    I_n\otimes P_R^{[n+1:N]}H\ket{\Psi}.
    \end{split}
\end{equation}
Hence, all terms in \eqref{eq:TDVPProjectors} cancel out, except the ones corresponding to $P_{\mathrm{loc,2}}^{[k, k+q]}$.
\end{proof}

For completeness we note that this can also be directly transferred to the one-site TDVP (1TDVP) scheme, where we have $P_{T_{\ket{\Psi}}\mc M_{\bs\chi}}H\ket{\Psi} =P_{\mathrm{loc}}^{[k, k+q]}H\ket{\Psi}$, with
\begin{equation}
    \begin{split}
    P_{\mathrm{loc}}^{[k, k+q]}&=
     \sum_{n=k}^{k+q} P_L^{[1:n-1]}\otimes I_n\otimes P_R^{[n+1:N]}\\
    &\phantom{=\ } -\sum_{n=k}^{k+q-1}P_L^{[1:n]}\otimes P_R^{[n+1:N]}.
    \end{split}
\end{equation}
Note that the 1TDVP no longer needs to address sites neighboring the gate generator and hence only acts on the range $n \in [k, k+q]$.

\subsection{Same computational complexity as TEBD}

To analyze the efficiency of our local TDVP method for circuit simulations, we compare its computational complexity to that of the standard TEBD approach. We break down the cost into three categories: single-qubit, nearest-neighbor, and long-range (multi-qubit) gates. Since asymptotic statements are primarily
of interest, we make use of a Landau notation to qualitatively capture the computational effort.

\subsubsection{General TDVP cost}
We start by discussing the 
general TDVP cost. The full 2-site TDVP method has a known computational complexity of
\begin{equation}
    c_\text{TDVP} = \mathcal{O}\left(N\left(\chi^2 d^3 D^2 + \chi^3 d^2 D + \chi^3 d^3\right)\right),
\end{equation}
where $N$ is the number of sites, $\chi$ is the maximum MPS bond dimension, $d$ is the local Hilbert space dimension ($d=2$ for qubits), and $D$ is the bond dimension of the local generator used in the TDVP update \cite{dunnett2020dynamically}.
In the localized variant, 
where updates are applied to a block of qubits $[k-1, k+q+1]$, the cost reduces to
\begin{equation}
     c_\text{locTDVP} =  \mathcal{O}\left(q\left(\chi^2 D^2 + \chi^3 D + \chi^3\right)\right) \approx \mathcal{O}(q \chi^3),
\end{equation}
assuming $D \ll \chi$, which is valid for typical quantum gates.

\subsubsection{Single-qubit gates}

For both TDVP and TEBD, single-qubit gates correspond to a local tensor contraction with complexity
\begin{equation}
     c_\text{1QB} = \mathcal{O}(\chi^2),
\end{equation}
which is negligible compared to multi-qubit operations.

\subsubsection{Nearest-neighbor gates}

For nearest-neighbor quantum gates, 
when $q=1$, the TDVP update acts on four consecutive qubits $[k-1, k+2]$ with cost
\begin{equation}
     c_\text{NN} = \mathcal{O}(\chi^3).
\end{equation}
This is directly comparable to the TEBD cost for a two-site gate, which also scales as $\mathcal{O}(\chi^3)$ and is dominated by the SVD step \cite{Vidal_2004}.

\subsubsection{Long-range gates}
We now turn to discussing long-ranged quantum gates.
To apply a gate on qubits $[k, k+q]$,
we encounter the following approaches.
\begin{itemize}
    \item \textbf{TEBD and SWAP networks:} This requires $2(q-1)$ SWAP gates to iteratively move the target qubits to adjacent positions, apply the gate, then undo the swaps. During this process, the SVD is the most expensive operation such that this leads to an overall complexity of
    \begin{equation}
         c_\text{TEBD} = \mc{O}(q\chi^3).
    \end{equation}

    \item \textbf{Local TDVP:} This avoids SWAP gates by directly evolving the state over the block $[k-1, k+q+1]$. This involves applying $q+1$ projectors 
    and performing corresponding SVD decompositions, resulting in a cost of
    \begin{equation}
         c_\text{locTDVP} = \mc{O}(q\chi^3).
    \end{equation}
\end{itemize}
Since for large systems $\chi$ always is the dominating factor, both complexities are asymptotically and practically equivalent at $\mc{O}(q\chi^3)$.

\subsubsection{Summary}
To summarize the discussion of the computational effort of the steps made use of, both TEBD and TDVP scale as $\mathcal{O}(\chi^3)$ per gate application, with the main difference being how bond dimension grows during the simulation. The practical runtime is therefore determined primarily by the entanglement structure and resulting bond dimensions induced by each method rather than by asymptotic complexity constants.

\begin{figure*}[h]
    \centering
\includegraphics[width=\textwidth]{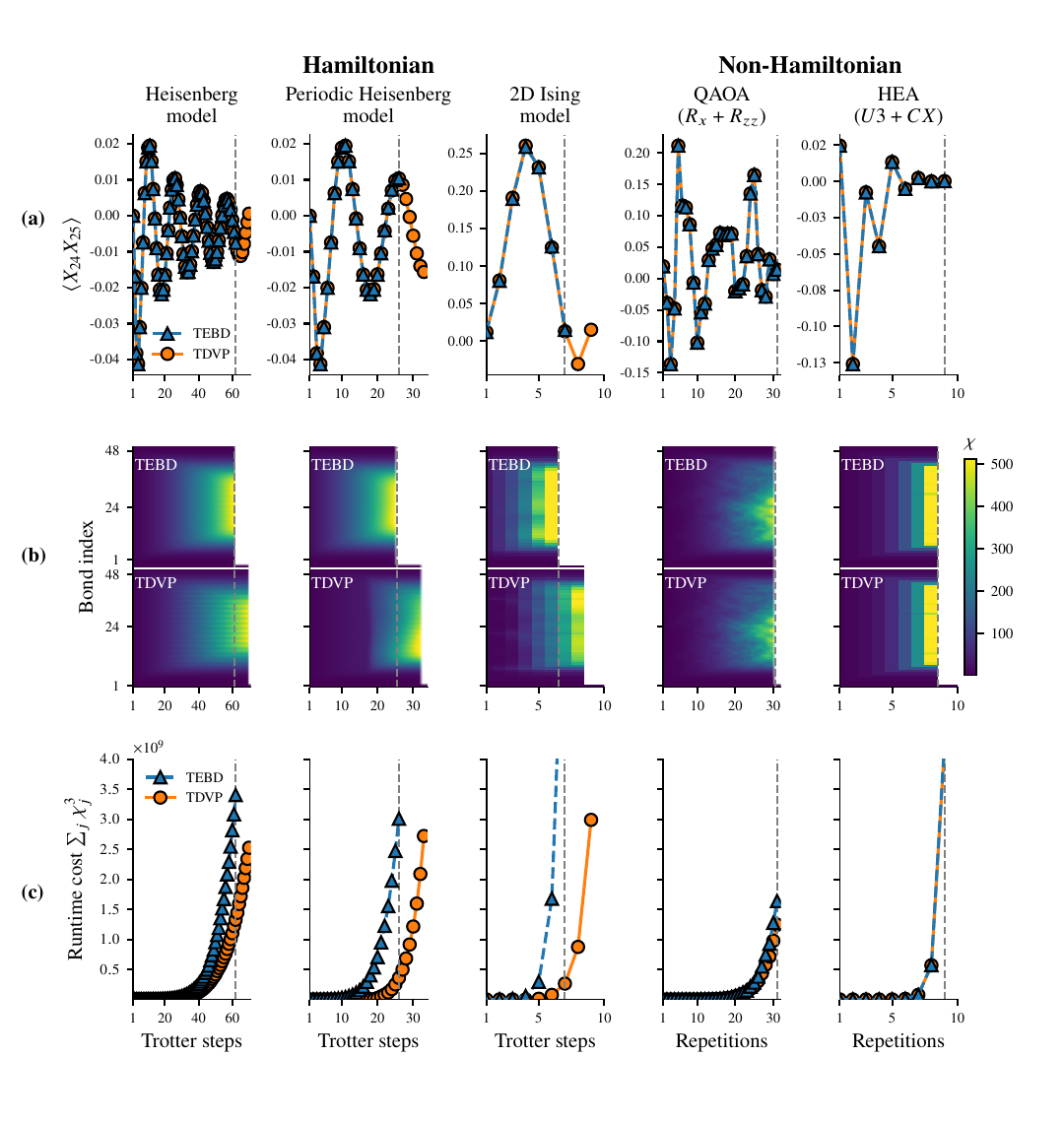}
    \caption{
    \textbf{Comparison of TDVP and TEBD for large-scale quantum circuit simulation.}
    This figure includes the results for simulating circuits of three Trotterized Hamiltonian circuits and two non-Hamiltonian circuits, all with 49 qubits. This includes a one-dimensional Heisenberg model (column 1), periodic Heisenberg model (column 2), two-dimensional Ising model (column 3), as well as a \emph{quantum approximate algorithm} (QAOA) circuit (column 4) and a \emph{hardware-efficient ansatz} (HEA) (column 5). 
    (a) Expectation values of the correlator $\langle X_{24} X_{25} \rangle$. All circuits are simulated using TEBD (blue triangles) and localized TDVP (orange circles). Both methods match within numerical precision. 
    (b) Bond dimension profiles for TEBD (top) and TDVP (bottom) up to a maximum $\chi=512$. TDVP maintains significantly lower bond dimensions throughout the simulation by spreading bond dimension growth across neighboring bonds of the MPS. Dashed lines mark TEBD simulation limits.
    (c) Runtime cost $\sum_j \chi_j^3$ versus Trotter steps. TDVP reduces computational costs through its decreased bond dimension, enabling the simulation of deeper circuits inaccessible to previous MPS-based quantum circuit simulation methods.
    }
    \label{fig:Results}
\end{figure*}

\section{Results}
We now turn to presenting numerical results relating to large-scale quantum circuits, both derived from Trotterized Hamiltonian evolution as well as for families of natural algorithmic quantum circuits.
We evaluate the performance and accuracy of our local TDVP method by comparing it to the established TEBD-based algorithm for simulating quantum circuits representing distinct quantum models as illustrated in \cref{fig:Results}.

We benchmark our method on five distinct 49-qubit scenarios, encompassing both physically motivated and algorithmic circuits. The first three are increasingly challenging Trotterized Hamiltonian circuits: an open-boundary one-dimensional Heisenberg model, a periodic-boundary one-dimensional Heisenberg model, and a $7 \times 7$ two-dimensional Ising model. All are simulated at the critical point $J = g = 1$ with a time step $\delta t = 0.1$.

These circuits represent highly entangling dynamics with progressively more long-range interactions: the open Heisenberg model includes only nearest-neighbor gates, the periodic variant adds a single long-range interaction, and the two-dimensional Ising model is dominated by long-range gates.

In addition to these structured models, we also simulate two general-purpose variational circuits: a 
\emph{quantum approximate optimization}
(QAOA) \cite{QAOA} circuit from the context of \emph{variational quantum algorithms} \cite{Variational}
composed of alternating layers of $R_x$ and $R_{zz}$ gates, and a \emph{hardware efficient ansatz} (HEA) circuit using $U3$ and \emph{Controlled-$X$} (CNOT) gates \cite{Kandala}. For both, each gate parameter is independently drawn from a uniform distribution over $[-\pi, \pi]$ with no repeated values.

Each column in \cref{fig:Results} corresponds to one of these five scenarios, where all subplots within a column are derived from a single, consistent dataset.
To ensure a fair and robust comparison, all results are benchmarked against IBM Qiskit's MPS simulator \cite{qiskit2024}, maintaining a consistent 
maximum bond dimension of $\chi = 512$ and an SVD truncation threshold of $s_{\text{max}} = 10^{-9}$. This threshold corresponds to a very mild truncation strategy, equivalent in both methods, allowing us to confidently attribute observed scalability differences solely to algorithmic distinctions rather than truncation artifacts.
It is important to stress that 
our implementation is openly accessible as part of the \emph{Yet Another Quantum Simulator} (YAQS) \cite{YAQS} within the \emph{Munich Quantum Toolkit} (MQT) framework \cite{wille_mqt2024}.

\cref{fig:Results}(a) compares the expectation values of the local correlator on the center bond $\langle X_{24}X_{25}\rangle$ computed by TDVP (orange circles) and TEBD (blue triangles) to ensure accuracy. Both methods exhibit identical behavior within numerical precision until the point where TEBD simulations become infeasible due to excessive entanglement-induced bond dimension growth. Dashed vertical lines mark the maximum feasible simulation times for TEBD, highlighting TDVP's capability to accurately simulate circuits well beyond TEBD's limits.

To understand the underlying reason for this performance disparity, we present the spatial-temporal bond dimension profiles in \cref{fig:Results}(b). While TEBD rapidly accumulates large bond dimensions at later times, TDVP maintains significantly lower bond dimensions throughout the simulation, reflecting more efficient management of entanglement. In particular, it spreads the bond dimension growth across a wider set of bonds, rather than just the bond which a gate acts on. This contrast is particularly pronounced in highly entangled scenarios, such as the two-dimensional Ising and periodic-boundary Heisenberg models, where TDVP shows clear advantages in controlling bond dimension growth. We see asymmetry in cases with long-range gates, likely stemming from the direction of the sweep. In the QAOA circuit, we see a decrease in the total bond dimension, however, similar simulability for a fixed maximum bond dimension. Finally, the HEA circuit causes the bond dimensions to grow uniformly exponentially in both methods. Notably, in all scenarios and at all Trotter steps, TDVP shows an overall reduction in total bond dimension of the MPS compared to TEBD, and in the worst case, an equivalent one.

\cref{fig:Results}(c) quantifies the computational efficiency of both methods in terms of runtime cost, approximated by $\sum_j \chi_j^3$, a metric directly related to computational complexity. In particular, this can be thought of as the complexity needed to perform consecutive SVDs across the MPS. TEBD exhibits steeply increasing computational costs, quickly reaching infeasible levels at relatively short simulation times. In contrast, TDVP's computational cost grows considerably more slowly, enabling simulations at larger circuit depths and longer evolution times.

The core insight is that by diffusing entanglement more globally, one encounters acceptable local bond dimensions. As a result, large-scale simulations of quantum circuits get within practical reach.
In summary, our results demonstrate that the local TDVP provides a robust and computationally efficient approach for large-scale quantum circuit simulations, enabling exploration of regimes previously inaccessible by TEBD due to prohibitive entanglement growth.

\section{Discussion}
In this work, we have revisited -- equipped with modern, powerful tools and recently developed insights -- the question of classically simulating quantum circuits.
Such endeavors are of key importance in efforts to benchmark large-scale quantum circuits and help assessing the performance of actual experimental quantum implementations of quantum circuits. Running classical against quantum 
simulations of quantum circuits has become a highly fruitful line of thought to improve both fields \cite{Trotzky,IBM_exp2023,Tindall_2024,Zhou_2020,PhysRevResearch.6.013326,SimonsPaper}. In the light of this development, pushing the boundaries of efficient classical simulation and challenging quantum advantage claims constitutes an important milestone.

Concretely, this work establishes the local TDVP as a scalable and accurate tool for quantum circuit simulation. Whereas TEBD only allows for control via the SVD threshold, the local TDVP approach introduces new algorithmic parameters -- such as the choice of time integrator, exponentiation method, or number of sweeps -- that can be tuned for further optimization. 

Future research can further explore these exciting possibilities, combine the local TDVP with other tensor network structures, and adapt the method to simulate noisy quantum circuits. While the present approach makes use of the mathematically optimal representation of the quantum state on the MPS manifold, it would be interesting to put the results into the context of 
rigorous bounds for truncation errors based on entanglement measures
\cite{SimonsPaper}. We expect that ongoing algorithmic refinements and applications to emerging quantum hardware will reveal even broader potential for the approach taken in this work. At a higher level, this work pushes the realm of classical simulation and challenges, in an interesting and novel fashion, the development of high accuracy quantum circuits, providing important benchmarks that may be used to challenge claims of quantum advantage.

\section*{Acknowledgments}
This work has been funded by the Deutsche Forschungsgemeinschaft (DFG, German Research Foundation) under Germany's Excellence Strategy – The Berlin Mathematics Research Center MATH+ (EXC-2046/1, project ID: 390685689, 
the Cluster of Excellence ML4Q
(EXC 2004/1, project ID: 390534769)
and the CRC 183, the BMFTR (MuniQC-Atoms, QuSol, QPIC-1), the Munich Quantum Valley, Berlin Quantum, the Quantum Flagship (PasQuans2, Millenion), and the European Research Council (DebuQC). For MATH+, the Munich Quantum Valley, and Millenion, this is a joint project involving more than one project partner.
It has also been funded by 
the European Union under the Horizon Europe Programme as part of the project NeQST (Grant Agreement 101080086) as well as by the European Research Council projects DA QC (Grant Agreement 101001318).

\bibliography{bib.bib}

\section{Methods}
\label{sec:methods}
\subsection{Matrix product states}
\label{sec:mps}
Consider a quantum state composed of \(N \in \mathbb{N}\) qudits, where each site is associated with a local Hilbert space \(\mathcal{H}_d\) of dimension \( d \in \mathbb{N} \). The Hilbert space of the entire system is constructed by taking the tensor product of the \( N \) local Hilbert spaces, denoted as \(\mathcal{H} = \bigotimes^N_{n=1} \mathcal{H}_d\). 
State vectors \(\ket{\Psi}\) within this multi-qudit Hilbert space \(\mathcal{H}\) can be expressed as a \emph{matrix product state} (MPS) \cite{White1992, Vidal_2004,quant-ph/0608197}, 
where
\begin{equation}
    \label{eq:MPSMethods}
\mathbf{\ket{\Psi}} = \sum^{d-1}_{q_1, \ldots, q_N=0} M^{q_1}_1 \dots M^{q_N}_N \ket{q_1, \ldots, q_N},
\end{equation}
where \(q_n \in [0, \dots, d-1]\) represent the local qudit for each 
site \(n = 1, \dots, N\).
This MPS 
is composed of \( N \) degree-3 tensors
\begin{equation}
    M := \{M_n \in \mathbb{C}^{d \times \chi_{n-1} \times \chi_{n}} \ | \ n = 1, \dots, N\},
\end{equation}
consisting of \( d \) matrices for each index \(q_n\)
\begin{equation}
    M_n := \{M^{q_n}_n \in \mathbb{C}^{\chi_{n-1} \times \chi_{n}} \ | \ q_n = 0, \dots, d-1\}.
\end{equation}
The complexity of this structure is governed by its bond dimensions \(\chi_n \in \mathbb{N}\) (with \(\chi_0 = \chi_N = 1\)), which grow with the entanglement entropy of the quantum state it represents.

The MPS representation is inherently non-unique and gauge-invariant, allowing a given quantum state to be expressed in multiple equivalent ways. This flexibility enables the individual tensors to be transformed into canonical forms, which in turn simplifies many computational operations.

These conditions are the left canonical form
\begin{equation}
    \sum_{q_n = 0}^{d-1}  \sum_{a_{n-1} = 1}^{\chi_{n-1}} \overline{M}_n^{q_n, a_{n-1}, a_n} M_n^{q_n, a_{n-1}, a_n} = I \in \mathbb{C}^{\chi_n \times \chi_n},
\end{equation}
and the right canonical form
\begin{equation}
    \sum_{q_n = 0}^{d-1}  \sum_{a_{n} = 1}^{\chi_{n}}  \overline{M}_n^{q_n, a_{n-1}, a_{n}} M_n^{q_n, a_{n-1}, a_{n}} = I \in \mathbb{C}^{\chi_{n-1} \times \chi_{n-1}},
\end{equation}
where $a_{n} \in [1, \dots, \chi_{n}]$ and $\overline{M}$ is the conjugated tensor.

By combining these conditions, the MPS is fixed in a mixed canonical form with an orthogonality center at the site 
tensor $j$
\begin{equation}
    \begin{split}
    & \sum_{q_n = 0}^{d-1}  \sum_{a_{n-1} = 1}^{\chi_{n-1}} \overline{M}_n^{q_n, a_{n-1}, a_n} M_n^{\sigma_n, a_{n-1}, a_n} = I \ \text{such that} \ n < j ,\\
    &\sum_{q_n = 0}^{d-1}  \sum_{a_{n} = 1}^{\chi_{n}} \overline{M}_n^{q_n, a_{n-1}, a_{n}} M_n^{q_n, a_{n-1}, a_{n}} = I  \ \text{such that} \ n > j .
    \end{split}
\end{equation}
Given the MPS $\ket{\Psi}\in\mc M_{\bs\chi}$, for
any site $1\leq n< N$, there are orthonormal bases
$\ket*{\Phi_{L,\alpha}^{[1:n]}}$ and $\ket*{\Phi_{R,\beta}^{[n+1:N]}}$
for the left and right blocks, respectively, such that
\begin{equation*}
    \ket{\Psi}=\sum_{\alpha,\beta=1}^{\chi_n}
    \ket*{\Phi_{L,\alpha}^{[1:n]}} \ [M_n]_{\alpha,\beta} \ \ket*{\Phi_{R,\beta}^{[n+1:N]}},
\end{equation*}
which can be thought of as a partition around the tensor $M_n$.

\subsection{Local time-dependent variational principle} \label{sec:LocalTDVP}
This section elucidates how the projectors of the local TDVP, as defined in \cref{eq:LocalTDVP}, can be implemented through a sequential sweep across an MPS. We begin by explicitly defining these projectors as
\begin{equation}
    P_{\mathrm{loc}}^{[k, k+q]} = \sum_{n=k-1}^{k+q} K_{n,n+1} - \sum_{n=k-1}^{k+q-1} G_{n,n+1},
    \label{eq:ProjectorSummation}
\end{equation}
where \(K_{n,n+1}\) denote forward projectors acting on each site and \(G_{n,n+1}\) represent backward projectors acting on each bond.
Substituting this definition into the projected TDVP equation (\cref{eq:diffeqproj}) yields
\begin{equation}
    \frac{d}{dt} \ket{\Psi} = -i \sum_{n=k-1}^{k+q} K_{n,n+1} H \ket{\Psi} + i \sum_{n=k}^{k+q} G_{n,n+1} H \ket{\Psi},
    \label{eq:ProjectorTDSE}
\end{equation}
which naturally separates into \(q+1\) forward-evolving equations
\begin{equation}
    \frac{d}{dt} \ket{\Psi} = -i K_{n,n+1} H \ket{\Psi},
    \label{eq:2TDVP_ForwardODE}
\end{equation}
and \(q\) backward-evolving equations
\begin{equation}
    \frac{d}{dt} \ket{\Psi} = +i G_{n,n+1} H \ket{\Psi}.
    \label{eq:2TDVP_BackwardODE}
\end{equation}
These terms are explicitly defined using the MPS partitions
\begin{equation}
\begin{split}
K_{n, n+1} = & \ket*{\Phi_{L}^{[1:n-1]}} \bra*{\Phi_{L}^{[1:n-1]}} \\
& \otimes I_n \otimes I_{n+1} \\
& \otimes \ket*{\Phi_R^{[n+2:k+q+1]}} \bra*{\Phi_R^{[n+2:k+q+1]}},
\end{split}
\label{eq:2TDVP_Krylov}
\end{equation}
and
\begin{equation}
\begin{split}
G_{n, n+1} = & \ket*{\Phi_{L}^{[1:n-1]}} \bra*{\Phi_{L}^{[1:n-1]}} \\
& \otimes I_n \\
& \otimes \ket*{\Phi_R^{[n+1:k+q+1]}} \bra*{\Phi_R^{[n+1:k+q+1]}}.
\end{split}
\label{eq:2TDVP_Filter}
\end{equation}
Fixing the MPS into
a mixed canonical form at site \(n\) and applying the conjugate transpose of the single-site map \(\ket*{\Phi_{L,\alpha}^{[1:n-1]}} \otimes \ket*{\Phi_{R,\beta}^{[n+2:N]}}\) to each side of \cref{eq:2TDVP_ForwardODE}, we simplify to local ODEs
\begin{equation}
    \frac{d}{dt} M_{n, n+1} = -i H^{\text{eff}}_{n, n+1} M_{n, n+1}, \quad n=k-1, \dots, k+q,
\end{equation}
where \(H^{\text{eff}}_{n, n+1}\) is a local effective Hamiltonian governing the evolution of tensor pairs \(M_{n,n+1}\).

These Hamiltonians are then matricized and exponentiated via the Lanczos method \cite{Lanczos_1950}. Post-exponentiation, the updated tensors are
\begin{equation}
    M_{n, n+1} = e^{-i H^{\text{eff}}_{n, n+1}} M_{n, n+1}.
    \label{eq:2TDVP_ForwardEvolving}
\end{equation}
Following this, a \emph{singular value decomposition} (SVD) with threshold \(s_{\text{max}}\) is applied, updating bond dimensions \(\chi_n\) to control error and move the orthogonality center forward from \(n\) to \(n+1\).

Subsequently, the backward evolution at site \(n+1\) involves simplified backward ODEs obtained by applying the conjugate transpose of the shifted single-site map \(\ket*{\Phi_{L,\alpha}^{[1:n]}} \otimes \ket*{\Phi_{R,\beta}^{[n+2:N]}}\) to each side of \cref{eq:2TDVP_BackwardODE}
\begin{equation}
    \frac{d}{dt} M_n = +i H^{\text{eff}}_n M_n, \quad n=k, \dots, k+q,
\end{equation}
since the bond tensor \(C_{n,n+1}\) is effectively identical to a single-site tensor \(M_n\).

Applying the Lanczos method again, this tensor evolves as
\begin{equation}
    M_n = e^{+i H^{\text{eff}}_n} M_n.
    \label{eq:2TDVP_BackwardEvolving}
\end{equation}
Thus, the local TDVP method practically mirrors a DMRG-like algorithm \cite{Schollw_ck_2011}, performing a spatial sweep from \(n = k\) to \(n = k+q\), alternating forward evolution updates for tensor pairs with backward evolution updates for individual tensors.

\onecolumngrid  
\appendix
\section*{Supplementary material} 

\setcounter{section}{0}
\setcounter{equation}{0}
\setcounter{figure}{0}
\renewcommand\theequation{S\arabic{equation}}
\renewcommand\thefigure{S\arabic{figure}}

\begin{figure*}
    \centering
 \includegraphics[width=.78\linewidth]{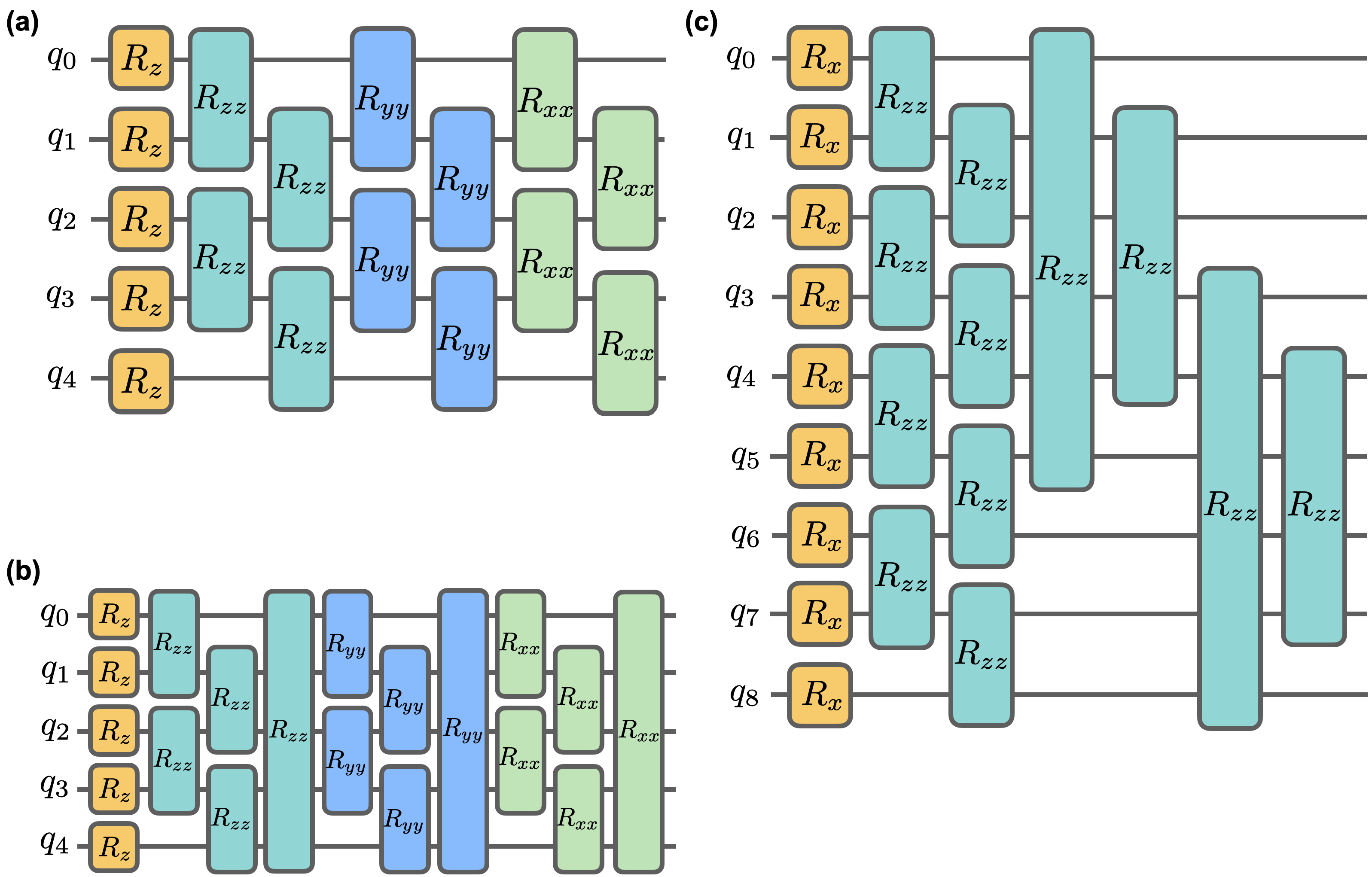}
    \caption{
        \textbf{Trotterized quantum circuits for the models benchmarked in this work.}
        \textbf{(a)} Circuit for the open boundary one-dimensional Heisenberg 
        model. 
        \textbf{(b)} Circuit for the periodic one-dimensional Heisenberg model.
        \textbf{(c)} Circuit for the $3 \times 3$ two-dimensional  Ising model, mapped to a one-dimensional snaking MPS layout.
    }
    \label{fig:circuits}
\end{figure*}

\section{Generators of quantum gates} 
\label{appendix:generator}

We list the generators of quantum gates used in our experiments.
Here, $\theta$ denotes the rotation angle parameter. The generators are provided in the convention $U(\theta) = \exp(-i \theta H)$, where $H$ is the generator. Note that we only use the generators of two-qubit gates, 
\begin{align}
R_{zz}(\theta) &= \exp\left(-i\,\frac{\theta}{2} Z \otimes Z\right), \quad
R_{yy}(\theta) = \exp\left(-i\,\frac{\theta}{2} Y \otimes Y\right), \\
R_{xx}(\theta) &= \exp\left(-i\,\frac{\theta}{2} X \otimes X\right), \quad
\text{CNOT} = \exp\left(-i\,\frac{\pi}{4} (I - Z) \otimes (I - X)\right).
\end{align}
since single qubit gates are contracted directly into the corresponding site tensors without using TDVP.

\section{Benchmark circuits}
In this work, we benchmark local TDVP on four classes of quantum circuits that reflect a broad range of structures found in many-body simulation and variational quantum algorithms. These include: (1) the one-dimensional Heisenberg model with open and periodic boundary conditions, (2) the two-dimensional Ising model on a square lattice, (3) quantum approximate optimization algorithm (QAOA) circuits using $R_x$ and $R_{zz}$ gates, and (4) hardware-efficient ansatz (HEA) circuits with random single- and two-qubit gates. All circuits were constructed to exhibit nontrivial entanglement dynamics and include both local and non-local interactions. For the Hamiltonian circuits, standard first-order Trotterization techniques are used, while the variational circuits follow common layouts from quantum algorithm design.

\subsection{One-dimensional Heisenberg model}
The simulated circuit corresponds to the Trotterized evolution under the one-dimensional XXX Heisenberg Hamiltonian (with open and periodic boundaries)
\begin{equation}
H_\text{Heis} = \sum_{\langle i,j \rangle} J\left(X_i X_j + Y_i Y_j + Z_i Z_j\right) + h \sum_i Z_i,
\end{equation}
where the sum is over nearest neighbors (and includes a term connecting $L-1$ and $0$ for periodic boundary conditions). These are performed at the critical point $J=h=1$ with a timestep $\delta t=0.1$.
Each Trotter step consists of the following gates, parameterized as
\begin{equation}
R_z    = \exp\left(-i\delta t \ h Z \right), \quad
R_{xx} = \exp\left(-i\delta t \ J X \otimes X \right), \quad
R_{yy} = \exp\left(-i\delta t \ J Y \otimes Y \right), \quad
R_{zz} = \exp\left(-i\delta t \ J Z \otimes Z\right).
\end{equation}
The gates are applied in an even-odd/odd-even decomposition across the chain. $R_z$ rotations are applied to all qubits, followed by $R_{zz}$, $R_{xx}$, and $R_{yy}$ gates acting on all nearest-neighbor pairs in two sweeps (even-odd and odd-even); for periodic boundaries, gates are additionally applied between sites $(L-1, 0)$. The non-periodic and periodic Heisenberg circuit for 5 sites is illustrated in \cref{fig:circuits}\textbf{(a)} and \cref{fig:circuits}\textbf{(b)}.

\subsection{Two-dimensional Ising model}
The simulated circuit corresponds to the Trotterized evolution under the two-dimensional Ising Hamiltonian on a rectangular $N = n_\text{rows} \times n_\text{cols}$ lattice
\begin{equation}
H_\text{Ising} = J \sum_{\langle i, j \rangle} Z_i Z_j + g \sum_i X_i,
\end{equation}
where $\langle i, j \rangle$ denotes all nearest-neighbor pairs on the grid. This is simulated at the critical point $J=g=1$ with a timestep $\delta t =0.1$.
Each Trotter step consists of the following gates, parameterized as
\begin{equation}
R_x    = \exp\left(-i \delta t\, g\, X \right), \qquad
R_{zz} = \exp\left(-i \delta t\, J\, Z \otimes Z\right) .
\end{equation}
A snaking (serpentine) mapping of the two-dimensional grid to a one-dimensional qubit ordering is used for efficient MPS simulation. $R_x$ rotations are applied to all qubits, followed by $R_{zz}$ gates acting on all horizontally and vertically adjacent pairs, using even-odd and odd-even decompositions within each row and column which is illustrated for 9 sites in \cref{fig:circuits}\textbf{(c)}.

\subsection{QAOA circuits}
To benchmark performance on variational quantum algorithms, we simulate the Quantum Approximate Optimization Algorithm (QAOA) for a 1D Ising cost Hamiltonian. Each circuit consists of $p$ layers of alternating unitaries derived from a cost Hamiltonian and a mixing Hamiltonian:
\begin{equation}
U(\vec{\gamma}, \vec{\beta}) = \prod_{l=1}^p \left[ \exp\left(-i \beta_l \sum_i X_i \right) \exp\left(-i \gamma_l \sum_{\langle i, j \rangle} Z_i Z_j \right) \right].
\end{equation}
Each layer is implemented with a sequence of single-qubit $R_x(\beta_\ell) = \exp(-i \beta_\ell X)$ gates, followed by two-qubit $R_{zz}(\gamma_\ell) = \exp(-i \gamma_\ell Z \otimes Z)$ gates applied to all nearest-neighbor pairs, using even-odd and odd-even sweeps as in the Trotterized models. The angles $(\vec{\gamma}, \vec{\beta})$ are randomly sampled from a uniform distribution (such that each gate's rotation angle is uniquely defined) to create diverse and entangling circuits, rather than optimized for a specific problem instance. This setup allows us to probe TDVP's behavior on variational, non-Hamiltonian circuits with structured but non-integrable dynamics.

\subsection{Hardware-efficient ansatz (HEA) circuits}
We also benchmark the method on \emph{hardware-efficient ansatz}  (HEA) circuits, which are widely used in variational quantum algorithms due to their shallow depth and compatibility with quantum hardware constraints. Each HEA layer consists of arbitrary single-qubit rotations followed by entangling two-qubit gates between neighboring qubits
\begin{equation}
U_\text{HEA} = \prod_{l=1}^{d} \left[ \bigotimes_i U_3^{(i)}(\theta_l) \right] \cdot \left[ \prod_{\langle i, j \rangle} \text{CZ}_{i,j} \right],
\end{equation}
where $U_3(\theta) = R_z(\theta_1) R_y(\theta_2) R_z(\theta_3)$ is a universal single-qubit gate, and $\text{CZ}$ is the controlled-$Z$ gate. We use a brickwork 
entangling pattern, alternating between even-odd and odd-even layers. The single-qubit parameters are sampled randomly from a uniform distribution, generating circuits with high expressivity and low regularity. This circuit class serves as a stress test for TDVP under irregular, high-depth gate sequences and random entanglement growth.

\end{document}